\numberwithin{equation}{section}
\theoremstyle{plain}
\newtheorem{theorem}{\bf Theorem}[section]
\newtheorem{observation}[theorem]{\bf Observation}
\theoremstyle{definition}
\newtheorem{definition}{\bf Definition}[section]
\newtheorem{notation}[definition]{Notation}
\theoremstyle{remark}
\begin{document}

\title{On Level persistence (Relevant level persistence numbers)}

\author{Dong Du}

\maketitle


\begin{abstract}

The purpose of this note is to describe a new set of numerical invariants, \textit{the relevant level persistence numbers},
and make explicit their relationship with the four types of bar codes,  a more familiar set  of complete invariants for level persistence.
The paper provides the opportunity  to compare level persistence with the \textit{persistence} introduced by Edelsbrunner-Letscher-Zomorodian
 called in this paper as \textit{sub-level persistence}. 

\end{abstract}

\section {Introduction}

The \textit{level persistence} for a real valued map was first considered in \cite{DW} and \cite{BDD} and thought as a refinement of the standard persistence (referred below as \textit{sub-level persistence}). 
It turned out to be a particular case  of a more general persistence theory, the \textit{Zigzag persistence} proposed by Carlsson and Silva cf.\cite{CS}.
The numerical invariants we have proposed for level persistence  are the \textit{relevant persistence numbers} and are equivalent with the four types of bar codes which came out from \textit{Zigzag persistence}. Their merits consist in the fact that they can be calculated  using  standard persistence algorithms with minor adjustments. In \cite{D} we have indicated how to calculate these numbers  for a simplicial map via persistence algorithms slightly modified. The purpose of this note is to make this relationship precise.

In order to explain this we  review the meaning of level persistence versus sub-level persistence and explain, from our perspective, the significance of bar codes and of relevant persistence numbers. We inform the reader that the bar codes proposed by Carlsson and Silva are based on graph  representations and derived decomposing  the representations associated to the map in indecomposable components. Our approach is different.

We propose here two concepts \textit{death} (left and right) and \textit{observability or detectability} (left and right).
The class of maps for which level persistence is naturally defined based on these two concepts is the class of \textit{tame maps}. So far all  maps which appear in practice are tame. In particular any simplicial map $f: X\rightarrow \mathbb R$, where $X$ is a finite simplicial complex and $f$ is linear on each simplex, and any Morse function are tame. Tameness of a map actually signifies that the topology of the level changes at
a discrete collection of values (referred to as critical values).  Precisely, 
\begin{definition}

 \label{tamemap}

A continuous map $f:X\rightarrow  \mathbb{R}$  is called \textbf{tame map} (cf. Definition 3.5, \cite{BDD}) if
$X$ is a compact ANR and there exists finitely many values $\min(f(X)) = t_0<t_1<\cdots<t_N = \max(f(X))$ (so called critical values) so that

(i) for any $t\neq t_0,t_1,\cdots,t_N$ there exists $\epsilon>0$ so that $f:f^{-1}(t-\epsilon,t+\epsilon)\rightarrow(t-\epsilon,t+\epsilon)$
and the second factor projection $X_t\times (t-\epsilon,t+\epsilon)\rightarrow(t-\epsilon,t+\epsilon)$ are fiberwise homotopy equivalent.

(ii) for any $t_i$ there exists $\epsilon>0$ so that canonical inclusions $X_{t_i}\hookrightarrow  X_{t_i,t_i+\epsilon}$ and
$X_{t_i}\hookrightarrow  X_{t_i-\epsilon,t_i}$ are deformation retractions.

\end{definition}

The definition can be extended to incorporate $X$ locally compact ANR's  and $f$ proper maps.
Instead of finite collection of critical values one requires that the set of critical values  is a discrete sequence  of  numbers $\cdots t_i <t_{i+1}< t_{i+2} < \cdots $. 

One can show that  a simplicial map is tame cf.\cite{D}, with the set of critical values being among the values of $f$ on vertices. In practice, for a simplicial map,  one can treat all values of $f$ on  vertices  as potential critical values.

The sub-level persistence needs a weaker concept, referred here as weakly tame map, which requires the change in the topology of sub-levels appearing only at finitely many $t'$s.  Precisely,
\begin{definition}

 \label{wtamemap}

A continuous map $f:X\rightarrow  \mathbb{R}$  is called \textbf{weakly tame map}  if
$X$ is a compact ANR and there exist finitely many values $\min(f(X)) = t_0<t_1<\cdots<t_N = \max(f(X))$ (so called critical values) so that
for any $ t, t_i \leq t <t_{i+1}$ the inclusion $X_{(-\infty, t_i]}\subseteq X_{(-\infty, t]}$ is a homotopy equivalence (for the purpose of sub-level persistence,
homology equivalence suffices). As above the definition can be extended to locally compct ANR's.
\end{definition}

Clearly \textit{tameness} implies \textit{weakly tameness}. The main results stated here are Theorem \ref{theorem326} and Theorem \ref{theorem327}
 in section \ref{RAL} and they were formulated in the author's Ph.D thesis. 
As suggested, we begin this note with recollection of sub-level persistence (section \ref{SLP}),  then general considerations about level persistence (section \ref{LP}),  
and ultimately the relation between the relevant persistence numbers and bar codes (section \ref{RAL}). 

Note that when we refer to \textit{homology} we mean homology with coefficients in a field $\kappa$ fixed once for all. The case $\kappa= \mathbb Z_2$ and $\kappa= \mathbb R$ are the most familiar. In this case the $r$-dimensional homology is a $\kappa$-vector space and its dimension is referred to as Betti number.
The author thanks D. Burghelea for advise and help.


\section{Sublevel Persistence \cite{BD}} \label{SLP}

Given a continuous map $f:X\rightarrow \mathbb{R}$, the sub-level persistent homology introduced in \cite{ELZ} and further developed in \cite{ZC}
is concerned with the following questions:

Q1. Does the class $x\in H_r(X_{-\infty,t})$ originates in $H_r(X_{-\infty,t''})$ for $t''\leq t$ ? Does the class
$x\in H_r(X_{-\infty,t})$ vanishes in $H_r(X_{-\infty,t'})$ for $t<t'$ ?

Q2. What are the smallest $t'$ and $t''$ such that this happens?

The information that is contained in the linear maps $H_r(X_{-\infty,t})\rightarrow H_r(X_{-\infty,t'})$ for any $t\leq t'$
is known as \textbf{sub-level persistence} and permits to answer the above questions.

Recall that sub-level \textbf{persistent homology} is the collection of vector spaces and linear maps $\{ H_r(X_{-\infty,t})\rightarrow H_r(X_{-\infty,t'}), t<t', t,t'\in \mathbb R\}$.
\medskip 
Let $x\in H_r(X_{-\infty,t})), x\ne 0$. One says that

(i) The element $x\in H_r(X_{-\infty,t}))$ is born at $t''$, $t''\leq t$, if $x$ is contained in \newline img$(H_r(X_{-\infty,t''}))\rightarrow H_r(X_{-\infty,t})))$
but is not contained in img$(H_r(X_{-\infty,t''-\epsilon}))\rightarrow H_r(X_{-\infty,t})))$ \newline for any $\epsilon>0$.

(ii) The element $x\in H_r(X_{-\infty,t}))$ dies at $t'$, $t'>t$, if its image is zero in \newline img$(H_r(X_{-\infty,t}))\rightarrow H_r(X_{-\infty,t'})))$
but is nonzero in  img$(H_r(X_{-\infty,t}))\rightarrow H_r(X_{-\infty,t'-\epsilon})))$ \newline for any $0<\epsilon<t'-t$.

(iii) The element $x\in H_r(X_{-\infty,t}))$   survives for ever, if its image is always nonzero in \newline img$(H_r(X_{-\infty,t}))\rightarrow H_r(X_{-\infty,t'})))$ for any $t'>t$.

\vskip 3mm

Note that most papers treat persistence for filtered spaces rather than for a map. Clearly a map provide a filtration by finitely many sub-levels if the map is weakly tame. 
Conversely, the standard construction \textit{telescope} in homotopy theory
permits to replace any finite filtered space
$K_0\subseteq K_1\subseteq \cdots\subseteq K_N$
by a weakly tame map  $f:X\rightarrow \mathbb{R}$  simply by taking
$X = K_0\times [t_0,t_1] \cup_{\phi_1} K_1\times [t_1,t_2] \cup_{\phi_2}\cdots\cup_{\phi_{N-1}}K_{N-1}\times [t_{N-1},t_N]\cup_{\phi_N}K_N$
where $\phi_i: K_i\times \{t_{i+1}\}\rightarrow K_{i+1}\times \{t_{i+1}\}$ is the inclusion
and $f|_{K_i\times [t_i,t_{i+1}]}$  the projection of
$K_i\times [t_i,t_{i+1}]$ on $[t_i,t_{i+1}]$.

The sub-level persistence  for a filtered space is the sub-level persistence of the associated weakly tame map.

When $f$ is weakly tame, the sub-level persistence for each $r = 0, 1, \cdots, \dim X$ is determined by a finite collection of invariants
referred to as \textbf{bar codes for sub-level persistence} \cite{ZC}.
The $r$-bar codes for sub-level persistence of $f$ are  intervals of the form $[t, t')$ or $[t, \infty)$
with $t< t'$.

The number $\mu_r(t,t')$ of  $r$-bar codes which identify to the interval $[t, t')$ is the maximal number of linearly independent
homology classes in $H_r(X_{-\infty,t})$, which are born at $t$, die at $t'$ and remain independent in
img($H_r(X_{-\infty,t})\rightarrow H_r(X_{-\infty,s})$) for any $s$, $t\leq s<t'$.

The number $\mu_r(t,\infty)$ of $r$-bar codes which identify to the interval $[t, \infty)$ is the maximal number of linearly independent
homology classes  in $H_r(X_{-\infty,t})$ which are born at $t$, 
and remain independent in
img($H_r(X_{-\infty,t})\rightarrow H_r(X_{-\infty,s})$) for any $s>t$.

It follows from the above definitions  that for a weakly tame map 
the set of $r$-bar codes for sub-level persistence is  finite and any $r$-bar code  is an interval of the form $[t_i, t_j)$ or $[t_i, \infty)$
with $t_i, t_j$  critical values of $f$ and $t_i< t_j$.

From these bar codes one can derive the \textbf{Betti numbers} $\beta_r(t, t')$, the dimension of
\newline $\text{img}(H_r(X_{-\infty,t})\rightarrow H_r(X_{-\infty,t'}))$, for any $t\leq t'$ and get the answers to questions Q1 and Q2.

For example,
\begin{equation}\label {E1}
\beta_r(t, t') = \text{ the number of } r \text{-bar codes which contain the interval } [t, t'].
\end{equation}

From the Betti numbers $\beta_r(t,t')$ one can also derive these $r$-bar codes.

Denote $\mu_r(t_i,t_j)$=number of $r$-bar codes
which equal to $[t_i,t_j)$ for  $t_0\leq t_i< t_j\leq \infty$, where
$t_0$ is the smallest critical value. We have (see \cite{EH})
\begin{equation}\label{E2}
\begin{array}{ll}
& \mu_r(t_i,t_j) \\ [3mm]
= & \hskip -1.5mm
 \left\{ \hskip -1.5mm
     \begin{array}{ll}
       \beta_r(t_i,t_{j-1})-\beta_r(t_{i-1},t_{j-1})-\beta_r(t_i,t_j)+\beta_r(t_{i-1},t_j), & t_0<t_i< t_j<\infty\\
       \beta_r(t_0,t_{j-1})-\beta_r(t_0,t_j), & t_i=t_0, t_0< t_j<\infty\\
       \beta_r(t_i,\infty)-\beta_r(t_{i-1},\infty), & t_0<t_i<\infty,t_j=\infty\\
       \beta_r(t_0,\infty), & t_i=t_0,t_j=\infty
     \end{array}
   \right.
\end{array}
\end{equation}

The computation of the bar codes for a filtration of simplicial or polytopal
complex or equivalently for a simplicial map is discussed in {subsection 3.4 of \cite{D}}
when the coefficients field  for homology groups is  $\mathbb{Z}_2$ or $\mathbb{R}$. The case of the field $\kappa=\mathbb Z_2$ is taken from \cite {ELZ}
and is, by now, the well known ELZ-algorithm .

\section{Level Persistence\cite{BD}} \label{LP}

Level persistence for a map $f:X\to \mathbb R$ was first considered in \cite{DW} and was better understood when the Zigzag persistence
was introduced and formulated in \cite{CSM}.
Given a continuous map $f:X\rightarrow \mathbb{R}$,
level persistence is concerned with the homology of the fibers $H_r(X_t)$ and addresses
questions of the following type.

Q1. Does the image of $x\in H_r(X_t)$ vanish in $H_r(X_{t, t'})$, where $t'>t$ or in $H_r(X_{t'',t})$, where $t''<t$ ?

Q2. Can $x\in H_r(X_t)$ be detected in $H_r(X_{t'})$ where $t'>t$ or in $H_r(X_{t''})$ where $t''<t$ ? The precise meaning of
detection is explained below.

Q3. What are the smallest $t'$ and $t''$ for the answers to Q1 and Q2 to be affirmative?

\vskip 3mm

To answer such questions one has to record information about the following linear maps
$$
H_r(X_t)\rightarrow H_r(X_{t, t'})\leftarrow H_r(X_{t'}).
$$
The \textbf{level persistence} is the information provided by this collection of vector spaces and linear maps  considered
for all $t$, $t'$.

Let $0\neq c\in H_r(X_t)$. One says that

(i) $c$ dies downward at $t'< t$, if its image is zero in img$(H_r(X_t)\rightarrow H_r(X_{t', t}))$
but is nonzero in img$(H_r(X_t)\rightarrow H_r(X_{t'+\epsilon,t}))$ for any $0<\epsilon<t-t'$.

(ii) $c$ dies upward at $t''>t$, if its image is zero in img$(H_r(X_t)\rightarrow H_r(X_{t, t''}))$
but is nonzero in img$(H_r(X_t)\rightarrow H_r(X_{t, t''-\epsilon}))$ for any $0<\epsilon<t''-t$.

We say that $x\in H_r(X_t)$ can be detected at $t'\geq t$, if its image in $H_r(X_{t,t'})$ is nonzero and is
contained in the image of $H_r(X_{t'})\rightarrow H_r(X_{t,t'})$.
Similarly, the detection of $x$ can be defined for $t''<t$ also.

In case of sub-level persistence for  tame maps the collection of the vector spaces and linear maps is determined up to coherent isomorphisms
by a collection of invariants called \textbf{bar codes for level persistence} which are intervals of the form
$[t, t']$ with $t\leq t'$ and  $(t, t')$, $(t, t']$, $[t, t')$ with $t< t'$.

These bar codes are called {\it invariants}  because two tame maps $f: X\rightarrow \mathbb{R}$ and $g: Y\rightarrow \mathbb{R}$
which are fiber-wise homotopy equivalent have the same associated bar codes.
The above result can be derived  from  Zigzag persistence  but, in view of definitions above can be proven directly. The details of the derivation are not contained 
 in this paper.

An open end of an interval signifies the death of a homology class at that end (left or right) whereas a closed end signifies
that a homology class cannot be detected beyond this level (left or right).

There exists an $r$-bar code $(t'', t')$ if there exists a class $x\in H_r(X_{t})$ for some $t''<t<t'$
which is detectable for $t''< s< t'$ and dies at $t''$ and $t'$.
The multiplicity of $(t'', t')$ is the maximal number of linearly independent classes
in $H_r(X_{t})$ such that

(i) all remain linearly independent in img($H_r(X_{t})\rightarrow H_r(X_{t,s})$) for $t\leq s< t'$
and  \newline img($H_r(X_{t})\rightarrow H_r(X_{s,t})$) for $t''< s\leq  t$;

(ii) all die at $t''$ and $t'$.

\vskip 3mm

Notice that the change of $t$ above will not affect the multiplicity of $(t'',t')$.

There exists an $r$-bar code $(t'', t']$ if there exists an element $x\in H_r(X_{t'})$
which is not detectable for $s>t'$ and detectable for $t''< s\leq t'$ and dies at $t''$.
The multiplicity of $(t'', t']$ is the maximal number of linearly independent elements
in $H_r(X_{t'})$ such that

(i) neither one is detectable for $s>t'$;

(ii) all remain linearly independent in img($H_r(X_{t'})\rightarrow H_r(X_{s,t'})$) for $t''< s\leq t'$;

(iii) all dies at $t''$.

\vskip 3mm

There exists an $r$-bar code $[t'', t')$ if there exists an element $x\in H_r(X_{t''})$
which is not detectable for $s<t''$ and detectable for $t''\leq s<t'$ and dies at $t'$.
The multiplicity of $[t'', t')$ is the maximal number of linearly independent elements
in $H_r(X_{t''})$ such that

(i) neither one is detectable for $s<t''$;

(ii) all remain linearly independent in img($H_r(X_{t''})\rightarrow H_r(X_{t'',s})$) for $t''\leq s<t'$;

(iii) all dies at $t'$.

\vskip 3mm

There exists an $r$-bar code $[t'', t']$ if there exists an element $x\in H_r(X_{t''})$
which is not detectable for $s<t''$ or $s>t'$ and detectable for $t''\leq s\leq t'$.
The multiplicity of $[t'', t']$ is the maximal number of linearly independent elements
in $H_r(X_{t''})$ such that

(i) neither one is detectable for $s<t''$ or $s>t'$;

(ii) all remain linearly independent in img($H_r(X_{t''})\rightarrow H_r(X_{t'',s})$) for $t''\leq s\leq t'$.

\vskip 3mm

Note, that a priory, the set of linearly independent elements in $H_r(X_t)$ for each $t$ between $t'$ and $t''$ might be very different for different  $t'$s.  The tameness hypothesis insures however their consistency.

In view of the description  above 
for a tame map, the set of $r$-bar codes for level persistence is finite. Any  $r$-bar code  is an  interval of the form
$[t_i, t_j]$ with $t_i\leq t_j$ critical values or   $(t_i, t_j)$, $(t_i, t_j]$, $[t_i, t_j)$ with $t_i< t_j,$ $t_i, t_j$ critical values.

%
%

\begin{notation}
Given a tame map $f:X\rightarrow \mathbb{R}$ with critical values $t_0<\cdots<t_N$,
denote by
$$
\begin{array}{l}
\,\, BL_r(f) := \text{the number of all $r$-bar codes for level persistence} \\
                     \text{\hskip 1.9cm (with respect to \emph{r-}th homology groups)}.\\
N_r(t_i,t_j) := \text{the number of intervals } (t_i, t_j) \text{ in } BL_r(f). \\
N_r(t_i,t_j] := \text{the number of intervals } (t_i, t_j] \text{ in } BL_r(f). \\
N_r[t_i,t_j) := \text{the number of intervals } [t_i, t_j) \text{ in } BL_r(f). \\
N_r[t_i,t_j] := \text{the number of intervals } [t_i, t_j] \text{ in } BL_r(f). \\
\end{array}
$$
Hence $\sharp BL_r(f) = N_r(t_i,t_j) + N_r(t_i,t_j] + N_r[t_i,t_j) + N_r[t_i,t_j]$.

\end{notation}

In {Figure \ref{figure421}}, we indicate the bar codes both for sub-level and level persistence for some simple map in order to illustrate their
differences and what they have in common.  

For example looking at {Figure \ref{figure421}} the class consisting of the sum of two circles at level $t$ is not detected on the right,
but is detected at all levels on the left up to (but not including) the level $t'$.

Level persistence provides considerably more information than the sub-level persistence \cite{BDD}
and the bar codes for the sub-level persistence can be recovered from the bar codes for the level
persistence.
An $r$-bar code $[s, t)$ for level persistence contributes an $r$-bar code $[s,t)$ for sub-level persistence.
An $r$-bar code $[s, t]$ for level persistence contributes an $r$-bar code $[s,\infty)$ for sub-level persistence.
$r$-bar codes $(s,t]$ and $(s,t)$ for level persistence contribute nothing to $r$-bar codes for sub-level persistence.
An $r$-bar codes $(s,t)$  for level persistence contributes an $r+1$-bar code $[t,\infty)$  for sub-level persistence.
See {Figure \ref{figure421}} and {Lemma \ref{levsublev}} below.

\begin{figure}[h!]

\includegraphics[scale=.4]{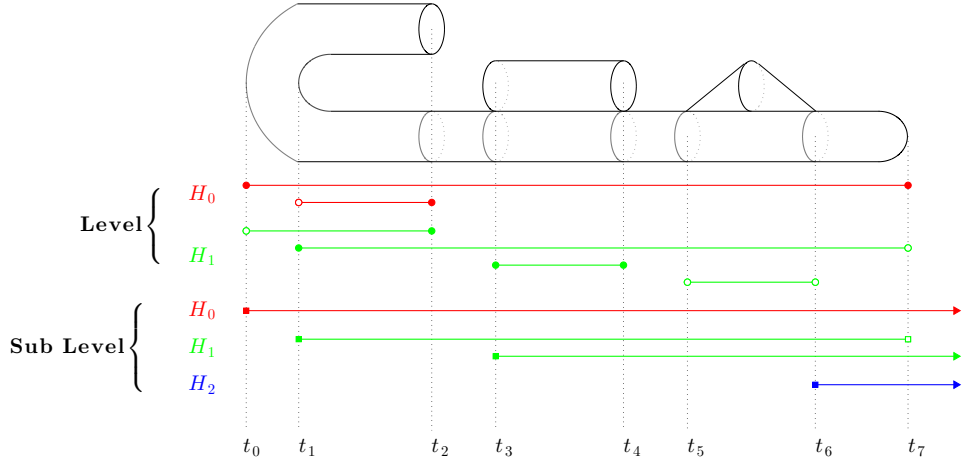}

\caption{Bar codes for level and sub-level persistence.} \label {figure421}

\end{figure}

\begin{theorem} \label{levsublev}
Given a tame map $f: X\rightarrow \mathbb{R}$ with critical values $t_0<t_1<\cdots<t_N$.
We have
$$
\mu_r(t_i,t_j) = N_r[t_i, t_j)
$$
$$
\mu_r(t_i,\infty) = \sum_{l=i}^N N_r[t_i, t_l] + \sum_{l=0}^{i-1} N_{r-1}(t_l,t_i)
$$
for any critical values $t_i<t_j$.
\end{theorem}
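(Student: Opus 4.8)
\section*{Proof proposal}

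The plan is to reduce both identities to the degreewise correspondence between the four families of level bar codes and the two families of sub-level bar codes that is announced informally in the paragraph preceding the theorem, and then to obtain the formulas by counting. Granting the correspondence, the first identity is immediate: the only level bar codes that can produce a \emph{finite} sub-level bar code $[t_i,t_j)$ are the level $r$-bar codes $[t_i,t_j)$ themselves, so $\mu_r(t_i,t_j)=N_r[t_i,t_j)$. For the infinite sub-level bar codes $[t_i,\infty)$ there are exactly two sources: the level $r$-bar codes $[t_i,t_l]$, which must have left endpoint $t_i$ and arbitrary right endpoint $t_l\geq t_i$ (the first sum), and the level $(r-1)$-bar codes $(t_l,t_i)$ with right endpoint $t_i$, which contribute through a degree shift (the second sum). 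Thus the whole theorem rests on establishing the correspondence and checking that it preserves multiplicities.

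The main tool is the Mayer--Vietoris sequence. For a regular value $s$ with $t_{k-1}<s<t_k$ one has $X_{-\infty,t_k}=X_{-\infty,s}\cup X_{s,t_k}$ with intersection homotopy equivalent, by the deformation retractions guaranteed by tameness, to the fiber $X_s$, giving
$$\cdots\to H_r(X_s)\to H_r(X_{-\infty,s})\oplus H_r(X_{s,t_k})\to H_r(X_{-\infty,t_k})\xrightarrow{\partial}H_{r-1}(X_s)\to\cdots.$$
Tameness yields $H_*(X_{-\infty,s})\cong H_*(X_{-\infty,t_{k-1}})$ together with the fiber and interlevel identifications, so running this over all $k$ lets me translate the birth and death of a sub-level class at a critical value into statements about fiber classes, which is precisely the language in which level bar codes are defined.

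I would then match the types directly. A closed left endpoint $[t_i$ means the class is not detectable below $t_i$, which is exactly the condition that it is \emph{born} at $t_i$ in the sub-level module; an open left endpoint $(t_i$ means the class dies downward at $t_i$, hence bounds in $X_{-\infty,s}$ for $s$ just above $t_i$ and so contributes nothing to sub-level homology in degree $r$. At the right end, a class that dies upward at $t_j$ (open right $t_j)$) gets bounded and therefore dies at $t_j$ in the sub-level module, producing $[t_i,t_j)$; a class that remains undead but stops being detectable (closed right $t_j]$) is a genuine nonzero class of $H_r(X)$ and hence survives forever, producing $[t_i,\infty)$. This accounts for $[t_i,t_j)\mapsto[t_i,t_j)$, for $[t_i,t_j]\mapsto[t_i,\infty)$, and for the fact that $(t_i,t_j]$ and $(t_i,t_j)$ contribute nothing in degree $r$.

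The remaining and most delicate point, which I expect to be the main obstacle, is the degree-raising correspondence in which a level $(r-1)$-bar code $(t_l,t_i)$ produces a sub-level $r$-bar code $[t_i,\infty)$. Here a fiber $(r-1)$-class that floats on $(t_l,t_i)$ and dies upward at $t_i$ must be exhibited as the image under the connecting homomorphism $\partial$ of an $r$-class of $X_{-\infty,t_i}$ that is born exactly at $t_i$ and never dies; geometrically, capping the floating cycle at its upper death value $t_i$ seals a new $r$-dimensional cavity (as in the sphere, where $\partial$ carries $H_r(S^r)$ isomorphically onto the equatorial $(r-1)$-class of a regular fiber). I would carry this out by showing, via the sequence above, that $\partial$ restricts to an isomorphism from the subspace of classes in $H_r(X_{-\infty,t_i})$ born at $t_i$ and surviving forever onto the space of fiber $(r-1)$-classes that float from below and die upward at $t_i$, and that under the interval decomposition the dimension of the latter is $\sum_{l<i}N_{r-1}(t_l,t_i)$. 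The crux is to verify that the birth occurs at the \emph{upper} death value $t_i$ rather than at $t_l$, that the resulting class is nonzero in $H_r(X)$, and that no cancellation occurs between the two sources feeding $\mu_r(t_i,\infty)$, so that the multiplicities add exactly as in the second formula.
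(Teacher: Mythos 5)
Your route is genuinely different from the paper's. The paper's proof is almost entirely a reduction to previously stated formulas: item 1 is declared to follow from (\ref{E1}) and (\ref{E2}), and item 2 is obtained by writing $\mu_r(t_i,\infty)=\beta_r(t_i,\infty)-\beta_r(t_{i-1},\infty)$ via (\ref{E2}) and then citing Corollary 3.4 of \cite{BD}, which expresses $\beta_r(t_i,\infty)$ as the number of $(r-1)$-bar codes $(t_l,t_i)$ with $l<i$ plus the number of $r$-bar codes $[a,b]$ with $a\leq t_i$; the second formula is then the difference of two such counts. You instead propose to prove the type-by-type correspondence between level and sub-level bar codes directly by Mayer--Vietoris. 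That is exactly the ``different derivation \ldots independently of \cite{BD}'' which the paper mentions but does not supply, and if carried out it would make the theorem self-contained rather than dependent on an external corollary. So the strategy is sound and arguably more informative than the paper's.

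However, as written the proposal is a plan rather than a proof, and the gaps are precisely at the points where the work lies. First, you assert that ``not detectable below $t_i$'' (a condition on the image of $H_r(X_{t_i})$ in $H_r(X_{t_i-\epsilon,t_i})$ relative to the image of the fiber $H_r(X_{t_i-\epsilon})$) ``is exactly the condition'' of being born at $t_i$ in the sub-level module (a condition on the image of $H_r(X_{-\infty,t_i-\epsilon})$ in $H_r(X_{-\infty,t_i})$). These statements live in different diagrams, and identifying them requires propagating information through all critical values below $t_i$ using tameness; it is not a tautology. Second, the claim that a class with closed right endpoint $t_j]$ ``is a genuine nonzero class of $H_r(X)$ and hence survives forever'' is unproved: loss of detectability above $t_j$ says nothing a priori about nonvanishing in $H_r(X_{-\infty,t'})$ for $t'>t_j$, and one must rule out that such a class is killed at some later critical value. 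Third, and most seriously, the degree-raising step --- that $\partial$ restricts to an isomorphism from the ``born at $t_i$ and immortal'' subspace onto the span of the floating $(r-1)$-classes dying upward at $t_i$, with no cancellation against the $[t_i,t_l]$ contributions --- is exactly the content of Corollary 3.4 of \cite{BD} that the paper leans on, and you explicitly defer it (``I would carry this out by showing\ldots''). Until that isomorphism and the additivity of the two sources are actually established, the second identity is not proved.
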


\begin{proof} 
Item 1 follows from formulas (\ref{E1}) and (\ref{E2}). 

Item 2 is more elaborate. One uses formula (\ref{E2}) which calculates $\mu_r(t_i, \infty)$ as $\mu_r(t_i, \infty)= \beta_r(t_i, \infty) - \beta_r(t_{i-1}, \infty)$.  A calculation of $\beta_r(t_i, \infty)$ can be recovered from Corollary 3.4 in \cite{BD} which implies that 
this number is exactly the number of $(r-1)$-bar codes of the form $(t_l, t_i), l= 0,1, \cdots, i-1$ plus the number of $r$-bar codes of the form $[a,b]$ with $a\leq t_i$.
Clearly $a, b$ should be critical values. A different derivation can be achieve independently of \cite {BD}.
\end{proof}

The bar codes for the level persistence can be also recovered from the bar codes for the sub-level persistence but from the bar codes of a collections of tame maps canonically associated to $f$.
This will be described in the next subsection.

\section{Relations Between Relevant Persistence Numbers and Bar Codes} \label{RAL}

For this purpose one uses an alternative but equivalent way to describe the level persistence based on a different collection of numbers,  referred below as relevant persistence numbers, $l_r, l_r^+, l^-_r, e_r, i_r$. 

\begin{definition}
For a continuous map $f: X\rightarrow \mathbb{R}$ and $t''\leq t\leq t'$,
let $L_r(t):=H_r(X_t)$, $L_r^+(t;t'):=\ker(H_r(X_t)\rightarrow H_r(X_{t, t'}))$,
$L_r^-(t;t''):=\ker(H_r(X_t)\rightarrow H_r(X_{t'',t}))$
and $$I_r(t, t'):= \text{img}(H_r(X_t)\rightarrow H_r(X_{t,t'})) \cap \text{img}(H_r(X_{t'})\rightarrow H_r(X_{t,t'})).$$

Define the \textbf{relevant level persistent numbers}

\begin{enumerate}
\item
$l_r(t):= \dim L_r(t)$
\item
$l_r^+(t;t'):=\dim L_r^+(t;t')$
\item
$l_r^-(t;t''):=\dim L_r^-(t;t'')$
\item  $e_r(t;t',t''):=\dim(L_r^+(t;t')\cap L_r^-(t;t''))$ 
\newline \newline and
\item $i_r(t,t'):=\dim(I_r(t,t'))$
\end{enumerate}

\end{definition}

The relation between these collections of numbers is illustrated in the diagram below. 

$$
\xymatrix
{
	*+[F]\txt{$i_r(t, t')$} \ar@{=>}[rr]^{\hskip -1cm Thm\; 4.2} &
	 &
	 *+[F]\txt{$ l_r(t), l_r^+(t;t')$\\$l_r^-(t;t''), e_r(t;t',t'')$} \ar@{=>}[rr]^{\hskip 0.5cm Thm\; 4.3} &
	 &
	*+[F]\txt{$N_r([t, t'])$\\ $N_r((t, t'))$\\ $N_r((t, t'])$\\ $N_r([t, t'))$} \ar@/_-4pc/[ll]^{\hskip -.8cm Observation\; 4.1} \ar@/_-8pc/[llll]^{Observation\; 4.1}
}
$$

The first four have geometric meaning the last ones (the fifth) are more technical. However the first four $l_r, l_r^+, l^-_r, e_r$ can be derived from the last ones, $i_r$
by Theorem \ref{theorem326} 

One can derive all the numbers $l_r, l_r^+, l^-_r, e_r, $ as well as $i_r$  from the number of bar codes $N_r(t_i, t_j)$, $N_r(t_i, t_j]$, $N_r[t_i, t_j)$, $N_r[t_i, t_j]$
by Observation \ref{O1}.

\begin{observation}\label {O1}
For a tame map
we can derive relevant level persistent numbers
from the numbers $N's$ of bar codes for level persistence.
\end{observation}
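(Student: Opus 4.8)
The plan is to exploit the decomposition of the level persistence module of $f$ into interval summands, one for each $r$-bar code, which underlies the very definition of the bar codes recalled in Section \ref{LP}. Since $f$ is tame, the whole collection of fiber and interval homologies $\{H_r(X_s)\}$, $\{H_r(X_{s,s'})\}$ with the inclusion-induced maps splits (up to coherent isomorphism) as a finite direct sum of interval modules $M_I$, indexed by the bar codes $I$ counted by the $N$'s, each $M_I$ being one-dimensional precisely on the fibers $X_s$ with $s$ in the detectable range of $I$ (and on the interval homologies it covers) and zero elsewhere. Because $L_r(t)$, $L^+_r(t;t')$, $L^-_r(t;t'')$, $L^+_r\cap L^-_r$ and $I_r(t,t')$ are all built from kernels, images and intersections of maps in this diagram, and these constructions are compatible with the direct sum, each relevant number is additive over the summands. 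Hence it suffices to compute the contribution of a single $M_I$ to each relevant number and then sum with the multiplicities $N$.

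First I would tabulate the contribution of one bar code $I$ using the dictionary recalled in Section \ref{LP}: an open end is a death, a closed end is a loss of detectability. Reading off the four descriptions of bar code types, $M_I$ contributes $1$ to $l_r(t)$ exactly when $t$ lies in the interval $I$ (respecting its open/closed ends); it contributes $1$ to $l^+_r(t;t')$ exactly when $t\in I$ and $I$ has an open right endpoint $b\le t'$, so that the class genuinely dies upward by $t'$, whence only types $(t_i,t_j)$ and $[t_i,t_j)$ contribute; symmetrically it contributes $1$ to $l^-_r(t;t'')$ exactly when $t\in I$ and $I$ has an open left endpoint $a\ge t''$, so that only types $(t_i,t_j)$ and $(t_i,t_j]$ contribute. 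Consequently $M_I$ contributes $1$ to $e_r(t;t',t'')$ precisely when both ends of $I$ are open with $t''\le a<t<b\le t'$, so only type $(t_i,t_j)$ feeds $e_r$. Finally $M_I$ contributes $1$ to $i_r(t,t')$ exactly when both $t$ and $t'$ lie in the detectable range of $I$, i.e. when the endpoints of $I$ bracket $[t,t']$ with the correct open/closed inequalities.

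Summing these indicators over all bar codes with their multiplicities then yields explicit finite formulas, for instance $l^+_r(t;t')=\sum_{t_i<t<t_j\le t'}N_r(t_i,t_j)+\sum_{t_i\le t<t_j\le t'}N_r[t_i,t_j)$ and $e_r(t;t',t'')=\sum_{t''\le t_i<t<t_j\le t'}N_r(t_i,t_j)$, with analogous expressions for $l_r$, $l^-_r$ and $i_r$; each relevant number is thereby a linear combination of the $N$'s, which is exactly the assertion. I would first reduce the ``for all $t,t',t''$'' in the definitions to the critical values: by tameness (conditions (i) and (ii) of Definition \ref{tamemap}) the fiber homologies and all the maps are locally constant off the critical values and change only across them, so the relevant numbers are determined by their values at and immediately around the $t_i$, turning the statement into the finite combinatorial count above.

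The main obstacle will be the endpoint bookkeeping rather than any deep homological input: one must consistently separate the two meanings of an interval end---death, which feeds the kernels $L^\pm_r$, versus mere loss of detectability, which only bounds the detectable range governing $l_r$ and $i_r$---and check that these distinctions match the kernel/image definitions of the relevant numbers across all four types, including the boundary cases where a query parameter equals a critical value. A secondary point requiring care is the additivity of the intersection $L^+_r\cap L^-_r$ and of $I_r(t,t')$ under the interval decomposition; this holds because, summand by summand, the subspaces in question are kernels or images of the decomposed maps, so taking intersections is compatible with the direct sum and no cross terms arise. For $i_r$ in particular one notes that on a one-dimensional summand the two images into $H_r(X_{t,t'})$ either both fill the line (detectable at both ends) or one of them vanishes, so the intersection is $1$ or $0$, confirming the stated count.
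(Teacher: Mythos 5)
Your argument is correct and lands on exactly the formulas the paper records: your expressions for $l_r^+(t;t')$ and $e_r(t;t',t'')$ coincide with (\ref{l+}) and (\ref{e_r}), and your descriptions of $l_r$, $l_r^-$ and $i_r$ match items 1, 2 and (\ref{l-}) of the paper's proof. The route, however, is genuinely different. The paper's proof consists solely of the five counting formulas, with the justification left implicit in its direct definition of the bar-code multiplicities as maximal cardinalities of linearly independent families with prescribed death/detectability behaviour; the introduction even stresses that this approach is meant to avoid decomposing the representation into indecomposables. You instead invoke the interval-module (zigzag) decomposition of the whole level-persistence diagram and reduce everything to additivity plus a per-summand computation. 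What your approach buys is a complete and uniform justification, in particular of the two points the paper silently skips: that the kernels $L_r^{\pm}$, their intersection, and $I_r(t,t')$ are additive over summands with no cross terms, and that it is an open end (death) rather than a closed end (mere loss of detectability) that feeds the kernels $L_r^{\pm}$, while both kinds of end bound the range relevant to $l_r$ and $i_r$. What it costs is an appeal to the Carlsson--de Silva structure theorem for tame zigzag modules, which the paper deliberately does not presuppose. Either way, the resulting linear expressions of the relevant numbers in terms of the $N$'s are identical.
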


\begin{proof}

For $t''\leq t\leq t'$
\begin{enumerate}
\item
$i_r(t,t')$ = number of intervals in $BL_r(f)$ which contain $[t, t']$;
\item
$l_r(t)$ = number of intervals in $BL_r(f)$ which contain $t$;
\item 
\begin{equation} \label{l+}
l_r^+(t; t') = \sum_{t_i\leq t< t_j\leq t'} N_r[t_i,t_j) + \sum_{t_i< t<t_j\leq t'} N_r(t_i,t_j);
\end{equation}
\item 
\begin{equation} \label{l-}
l_r^-(t; t'') = \sum_{t''\leq t_i< t\leq t_j} N_r(t_i,t_j] + \sum_{t''\leq t_i< t< t_j}  N_r(t_i,t_j);
\end{equation}
\item
\begin{equation} \label{e_r}
e_r(t;t',t'') = \sum_{ t''\leq t_i<t<t_j\leq t'} N_r(t_i,t_j).
\end{equation}
\end{enumerate}
\end{proof}

 
\begin{theorem} \label{theorem326}

For a tame map the numbers $i_r(t,t')$
determine the numbers  $l_r(t)$, $l_r^+(t;t')$, $l_r^-(t;t'')$
 and  $e_r(t;t',t'')$. 
\end{theorem}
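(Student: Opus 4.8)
The plan is to express each of the four numbers $l_r(t)$, $l_r^+(t;t')$, $l_r^-(t;t'')$, $e_r(t;t',t'')$ as an explicit integer combination of values of the single function $(s,s')\mapsto i_r(s,s')$, evaluated at the critical values and at their one-sided translates $t_k^{\pm}:=t_k\pm\epsilon$ (with $\epsilon$ smaller than every gap between consecutive critical values). The only input I would use is the combinatorial reading of $i_r$ furnished by Observation \ref{O1}(1): for a tame map, $i_r(s,s')$ is the number of bar codes in $BL_r(f)$ whose interval contains the closed interval $[s,s']$. Tameness guarantees finitely many critical values, all bar code endpoints among them, so $i_r(s,s')$ is locally constant on each open cell of the grid cut out by the critical values; in particular $i_r(\,\cdot\,,t_k^-)$, $i_r(t_k^+,\,\cdot\,)$ are well defined independently of the small $\epsilon$.

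I would first dispose of $l_r(t)$: since $X_{t,t}=X_t$ and the two maps defining $I_r(t,t)$ are both the identity of $H_r(X_t)$, one gets $I_r(t,t)=H_r(X_t)$, hence $l_r(t)=i_r(t,t)$.

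The heart of the argument is a bookkeeping lemma converting the open/closed distinction into one-sided differences of $i_r$. Comparing the condition $J\supseteq[s,s']$ at $s=t_k$ versus $s=t_k^-$ (respectively $s=t_k^+$ versus $s=t_k$) isolates exactly the bar codes whose left endpoint is $t_k$ and is closed (respectively open), and symmetrically on the right. Using this, the right-open bar codes through $t$ that die at a fixed $t_j>t$ are counted by $i_r(t,t_j^-)-i_r(t,t_j)$, and summing over $t<t_j\le t'$ would yield
\[
l_r^+(t;t')=\sum_{t<t_j\le t'}\bigl(i_r(t,t_j^-)-i_r(t,t_j)\bigr),
\]
which I would check against \eqref{l+} reproduces $\sum_{t_i\le t<t_j\le t'}N_r[t_i,t_j)+\sum_{t_i<t<t_j\le t'}N_r(t_i,t_j)$. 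The mirror computation gives
\[
l_r^-(t;t'')=\sum_{t''\le t_i<t}\bigl(i_r(t_i^+,t)-i_r(t_i,t)\bigr),
\]
matching \eqref{l-}. For $e_r$ I would apply the device at both ends: the doubly-open mixed second difference $i_r(t_i^+,t_j^-)-i_r(t_i,t_j^-)-i_r(t_i^+,t_j)+i_r(t_i,t_j)$ counts precisely the open bar codes $(t_i,t_j)$, and summing over the rectangle $t''\le t_i<t<t_j\le t'$ reproduces \eqref{e_r}. (Equivalently, the same one-sided and mixed differences recover each of the four counts $N_r$ individually, after which Observation \ref{O1} finishes the job.)

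Since every right-hand side above is a finite integer combination of values of $i_r$, the four numbers are determined by $i_r$, which is the assertion. The step I expect to be the main obstacle is the open/closed bookkeeping in the lemma: one must verify that each one-sided difference selects endpoints of exactly the prescribed type and annihilates all others, and then handle the degenerate and boundary configurations (when $t$ is itself a critical value versus lying in an open gap, when the summation rectangle collapses, and when $t''$ or $t'$ meets the extreme critical values $t_0,t_N$). Each of these reduces to the elementary case analysis indicated above, the substantive content residing entirely in Observation \ref{O1} together with the tameness-induced local constancy of $i_r$.
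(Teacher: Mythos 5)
Your proposal is correct and takes essentially the same route as the paper: both rest on the reading of $i_r(s,s')$ as the number of bar codes containing $[s,s']$ (Observation \ref{O1}) and use one-sided and mixed second differences of $i_r$ across critical values to isolate bar codes with prescribed open or closed endpoints, then convert back via the formulas of Observation \ref{O1}. The only cosmetic difference is that the paper first recovers each of the four counts $N_r(\cdots)$ explicitly and then substitutes into \eqref{l+}, \eqref{l-}, \eqref{e_r}, whereas you telescope some of these differences directly into closed formulas for $l_r^{\pm}$.
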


\begin{proof}

$$
l_r(t) = i_r(t,t).
$$

$$
N_r(t_k, t_j) = i_r(t'',t') - i_r(t_k, t') - i_r(t'', t_j) + i_r(t_k, t_j)
$$
for any $t''\leq t'$ such that $t_k<t''<t_{k+1}$, $t_{j-1}<t'<t_j$.

$$
N_r[t_k, t_j] = i_r(t_k, t_j) - i_r(t'', t_j) - i_r(t_k, t') + i_r(t'', t')
$$
for any $t_{k-1}<t''<t_k$ and $t_j<t'<t_{j+1}$.

$$
N_r(t_k,t_j] = i_r(t'',t_j) - i_r(t_k,t_j) - i_r(t'',t') + i_r(t_k,t')
$$
for any $t_k<t''<t_{k+1}$ and $t_j<t'<t_{j+1}$.

$$
N_r[t_k,t_j) = i_r(t_k,t') - i_r(t_k,t_j) - i_r(t'',t') + i_r(t'',t_j)
$$
for any $t_{k-1}<t''<t_k$ and $t_{j-1}<t'<t_j$.

Plug in equation (\ref{l+}), (\ref{l-}) and (\ref{e_r})
we get $l_r^+(t;t')$, $l_r^-(t;t'')$ and  $e_r(t;t',t'')$.

\end{proof}

\begin{theorem} \label{theorem327}
For a tame map
the relevant persistent numbers $\{ l_r, l_r ^+, l_r^-, e_r\}$ determine the bar codes $N_r(\cdots)'$s.
\end{theorem}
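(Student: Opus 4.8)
The plan is to invert the summation formulas of Observation \ref{O1}, reading them as relations on the finite triangular array indexed by pairs of critical values $(t_i,t_j)$ and solving them by the same finite-difference (M\"obius-type) inversion that produced the formulas for the $N$'s in terms of $i_r$ in Theorem \ref{theorem326}. I would organize the recovery by barcode type, exploiting the structural feature visible in (\ref{l+}), (\ref{l-}) and (\ref{e_r}): each of the three families with at least one open end occurs in a distinct subcollection of the relevant numbers. First I would recover the open--open numbers $N_r(t_i,t_j)$ from $e_r$ alone. Since (\ref{e_r}) expresses $e_r(t;t',t'')$ as the sum of $N_r(t_i,t_j)$ over the rectangle $t''\le t_i<t<t_j\le t'$, fixing any $t$ with $t_k<t<t_m$ and forming the second mixed difference of $e_r(t;\,\cdot\,,\,\cdot\,)$ across $t_k$ (in the $t''$ slot) and $t_m$ (in the $t'$ slot) isolates the single term $N_r(t_k,t_m)$, exactly as in the displayed formulas of Theorem \ref{theorem326}.

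With the open--open family in hand, I would subtract its contribution from (\ref{l+}); what remains is $\sum_{t_i\le t<t_j\le t'}N_r[t_i,t_j)$, and a difference in $t$ across $t_k$ together with a difference in $t'$ across $t_m$ recovers $N_r[t_k,t_m)$. The symmetric manipulation of (\ref{l-}), again after removing the now-known open--open part, yields $N_r(t_k,t_m]$. These three steps are routine and self-contained, since each of the three families is the only unknown left once the others have been peeled off.

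The remaining and genuinely delicate case is the closed--closed family $N_r[t_i,t_j]$, and this is where I expect the main obstacle. As (\ref{l+}), (\ref{l-}) and (\ref{e_r}) show, a closed--closed barcode represents a class that dies neither upward nor downward, hence contributes to none of $l_r^+,l_r^-,e_r$; it is seen only through $l_r(t)=\dim H_r(X_t)$, that is, through the coverage count ``number of intervals containing $t$'' of Observation \ref{O1}. After subtracting the three already-recovered families from $l_r(t)$, one is left with the one-parameter function $t\mapsto\sum_{t_i\le t\le t_j}N_r[t_i,t_j]$, from which one must reconstruct a two-parameter array. Reading off, at each critical value and each gap, how many closed--closed bars start, end and straddle is immediate, but matching starting points to ending points is not pinned down by this coverage: the interleaved configuration $\{[t_0,t_2],[t_1,t_3]\}$ and the nested one $\{[t_0,t_3],[t_1,t_2]\}$ produce identical values of $l_r,l_r^+,l_r^-,e_r$.

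To resolve this I would try to reconstruct the auxiliary numbers $i_r(t,t')$ and then quote the inversion already available in Theorem \ref{theorem326} (its formula for $N_r[t_k,t_j]$). Writing $A=\operatorname{img}(H_r(X_t)\to H_r(X_{t,t'}))$ and $B=\operatorname{img}(H_r(X_{t'})\to H_r(X_{t,t'}))$, one has $\dim A=l_r(t)-l_r^+(t;t')$ and $\dim B=l_r(t')-l_r^-(t';t)$, and $i_r(t,t')=\dim A+\dim B-\dim(A+B)$; thus the whole difficulty collapses to computing $\dim(A+B)$, the rank of the combined map $H_r(X_t)\oplus H_r(X_{t'})\to H_r(X_{t,t'})$, by a Mayer--Vietoris analysis of the cobordism $X_{t,t'}$ with ends $X_t,X_{t'}$, tameness ensuring that the relevant sequences are finite and the inclusions well behaved. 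This rank is precisely the information that separates interleaved from nested closed--closed configurations: in the example above the two configurations give the same four relevant numbers but different $\dim(A+B)$, so $\dim(A+B)$ is not a function of $\{l_r,l_r^+,l_r^-,e_r\}$ alone. Consequently the closed--closed step appears to require $i_r$ (equivalently the rank datum $\dim(A+B)$) as genuinely independent input, and carrying the argument through hinges on supplying exactly this rank --- either by including $i_r$ among the data, as the diagram preceding Observation \ref{O1} suggests, or by extracting it from the additional structure of the level-persistence maps that tameness makes available.
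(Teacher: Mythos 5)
Your recovery of the three families with at least one open end coincides with the paper's own proof. The paper extracts $N_r(t_k,t_j)$ by exactly the second mixed difference of $e_r$ you describe, and it obtains $N_r[t_i,t_j)$ and $N_r(t_i,t_j]$ by first differencing $l_r^+(t_i;\cdot)$ and $l_r^-(t_j;\cdot)$ to form auxiliary counts $n_r\{t_i,t_j)$ and $n_r(t_i,t_j\}$ and then subtracting the already-recovered open--open terms $N_r(t_{i-1},t_j)$ and $N_r(t_i,t_{j+1})$ --- the same peeling-off you propose, organized through one extra layer of notation.

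The substance of your analysis is the closed--closed case, and there you have identified a genuine defect. At precisely that step the paper sets $n_r\{t_i,t_j\}=i_r(t_i,t_j)$ and feeds this into the recursion $N_r[t_i,t_j]=n_r[t_i,t_j\}-n_r[t_i,t_{j+1}\}-N_r[t_i,t_{j+1})$; that is, it silently invokes the fifth relevant number $i_r$, which does not appear in the list $\{l_r,l_r^+,l_r^-,e_r\}$ of the statement of Theorem \ref{theorem327}, nor in the box from which the arrow labelled Theorem 4.3 departs in the diagram. Your interleaved-versus-nested example shows this is not a removable blemish in the write-up but a gap in the statement itself: the tame maps given by the inclusions $[t_0,t_2]\sqcup[t_1,t_3]\hookrightarrow\mathbb{R}$ and $[t_0,t_3]\sqcup[t_1,t_2]\hookrightarrow\mathbb{R}$ have $0$-bar codes $\{[t_0,t_2],[t_1,t_3]\}$ and $\{[t_0,t_3],[t_1,t_2]\}$ respectively, yet both have the same coverage function $l_0$, and $l_0^+$, $l_0^-$, $e_0$ vanish identically for both, while $i_0(t_0,t_3)$ equals $0$ in the first case and $1$ in the second. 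So the four numbers cannot determine the closed--closed multiplicities, and your conclusion that the rank datum ($i_r$, equivalently $\dim(A+B)$ in your notation) must be supplied as independent input is correct; it is not extractable from the four numbers by any Mayer--Vietoris or other argument, since the four numbers simply do not separate the two configurations. The theorem is repaired exactly as you suggest: read ``relevant persistent numbers'' as the full list of the definition in Section \ref{RAL}, including $i_r$. With that emendation your argument and the paper's coincide.
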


\begin{proof}
First the numbers $ N_r(t_k, t_j)$ can be calculated by the formula.

\begin{equation}
N_r(t_k, t_j) = e_r(t; t_j,t_k) - e_r(t; t_j,t_{k+1})
- e_r(t; t_{j-1},t_k) + e_r(t; t_{j-1},t_{k+1}),
\end{equation}
for any $t_k<t<t_j$.

To determine the numbers $N_r(t_i,t_j]$, $N_r[t_i,t_j)$ and $N_r[t_i,t_j]$,
we introduce the following auxiliary numbers 
$$
\begin{array}{l}
n_r\{t_i,t_j\} := \text{the number of intervals in } BL_r(f) \text{ which intersect the levels }\\
                       \hskip 2.1cm X_{t_i} \text{ and } X_{t_j};\\
n_r\{t_i,t_j) := \text{the number of intervals in } BL_r(f) \text{ which intersect the level } X_{t_i} \\
                      \hskip 2.1cm \text{ with open end at } t_j;\\
n_r\{t_i,t_j] := \text{ the number of intervals in } BL_r(f) \text{ which intersect the level } X_{t_i} \\
                      \hskip 2.1cm \text{ with closed end at } t_j;\\
n_r(t_i,t_j\} := \text{ the number of intervals in } BL_r(f) \text{ which intersect the level } X_{t_j} \\
                      \hskip 2.1cm \text{ with open end at } t_i;\\
n_r[t_i,t_j\} := \text{ the number of intervals in } BL_r(f) \text{ which intersect the level } X_{t_j} \\
                      \hskip 2.1cm \text{ with closed end at } t_i.
\end{array}
$$

The numbers $ n_r\{t_i,t_j),$  $n_r\{t_i,t_j),$   $n_r\{t_i,t_j\}$ and $n_r[t_i,t_j\}$ can be derived from the
 relevant persistent numbers as indicated below

$$
\begin{array}{rcl}
n_r\{t_i,t_j) &=& l_r^+(t_i;t_j) - l_r^+(t_i;t_{j-1}) \\
n_r(t_i,t_j\} &=& l_r^-(t_j;t_i) - l_r^-(t_j;t_{i+1}) \\
n_r\{t_i, t_j\} &=& i_r(t_i,t_j) \\
n_r[t_i,t_j\} &=& n_r\{t_i,t_j\} - n_r\{t_{i-1},t_j\} - n_r(t_{i-1},t_j\}
\end{array}
$$

With their help one derive 

\begin{equation}
N_r(t_i,t_j] = n_r(t_i,t_j\} - n_r(t_i,t_{j+1}\} - N_r(t_i,t_{j+1})
\end{equation}
\begin{equation}
N_r[t_i,t_j) = n_r\{t_i,t_j) - n_r\{t_{i-1},t_j) - N_r(t_{i-1},t_j)
\end{equation}
\begin{equation}
N_r[t_i,t_j] = n_r[t_i,t_j\} - n_r[t_i,t_{j+1}\} - N_r[t_i,t_{j+1})
\end{equation}

\end{proof}

\vskip .5cm

The explicit calculation of  the 
relevant persistence numbers $l_r(t)$, $l_r^+(t;t')$, $l_r^-(t;t'')$
and $e_r(t;t',t'')$ is discussed  in {subsection 4.4 of \cite{D}} and is based on positive and negative bar codes  which are defined and calculated in terms of sub-level persistence via minor adjustments of the ELZ algorithm.

%

Alternatively,  we can  get the bar codes for the level persistence providing an alternative to the Carson-Silva algorithm cf. \cite{CS}  which calculates the level persistence bar codes as bar codes for Zigzag persistence.


\newpage

\end{document}